    \documentclass[10pt, a4paper, twocolumn]{IEEEtran}



\usepackage{amsmath,amsthm,epsfig,amssymb,verbatim,amsopn,cite,subfigure,multirow}
\usepackage{balance,nopageno}
\usepackage[usenames,dvipsnames]{color}
\usepackage[all]{xy}  
\usepackage{url}
\usepackage{amsfonts}
\usepackage{amssymb}
\usepackage{epsfig}
\usepackage{epstopdf}
\usepackage{slashbox}
\usepackage{bm}
\usepackage{balance}
\usepackage[margin=16.0mm,top=18mm,bottom=18mm]{geometry}


\newtheorem{Lemma}{Lemma}

\newtheorem{Corollary}[Lemma]{Corollary}

  {\proof}{\proofend}
\newtheorem{proposition}{Proposition}



\newcommand{\ve}[1]{\boldsymbol{#1}}
\newcommand{\E}[1]{\mathbb{E}\left\{#1\right\}}

 \newcommand{\vh}{\ve{h}}

 \newcommand{\vw}{\ve{w}}

\newcommand{\Us}{\mathsf{u}}
\newcommand{\Ds}{\mathsf{d}}
\newcommand{\AP}{\mathsf{a}}

\newcommand{\SINRd}{\mathsf{SINR_d}}
\newcommand{\SINRAP}{\mathsf{SINR_a}}

\newcommand{\SINRi}{\mathsf{SINR}_i}

\newcommand{\SNRd}{\mathsf{SNR_d}}
\newcommand{\SNRa}{\mathsf{SNR_a}}
\newcommand{\MRC}{\mathsf{MRC}}
\newcommand{\MRT}{\mathsf{MRT}}
\newcommand{\RFD}{R_\mathsf{{FD}}}
\newcommand{\RHDs}{R_\mathsf{{HD}}^{\mathsf{{RC}}}}
\newcommand{\RHDd}{R_\mathsf{{HD}}^\mathsf{{AC}}}
\newcommand{\snr}{\mathsf{snr}}
\newcommand{\HD}{\mathsf{HD}}
\newcommand{\FD}{\mathsf{FD}}

\newcommand{\EI}[1]{\mathbb{E}_{I_{\Ds,\Us}}\left\{#1\right\}}

\newcommand{\Erth}[1]{\mathbb{E}_{r,\theta}\left\{#1\right\}}
\newcommand{\Er}[1]{\mathbb{E}_{r}\left\{#1\right\}}

\newcommand{\Sn}{\sigma_n^2}
\newcommand{\Sap}{\sigma_{\AP\AP}^2}

\newcommand{\Prob}{\textnormal{Pr}}

\newcommand{\AuthorOne}{Mohammadali Mohammadi$^\dag$}
\newcommand{\AuthorTwo}{Himal A. Suraweera$^\S$}
\newcommand{\AuthorThree}{Ioannis Krikidis$^\ddag$}
\newcommand{\AuthorFour}{Chintha Tellambura$^*$}

\usepackage{multibib}
\usepackage[nodisplayskipstretch]{setspace}
\setlength{\belowdisplayskip}{5pt}
\setlength{\abovedisplayskip}{5pt}
\newcites{Prim}{Very important papers}

\definecolor{light-gray}{gray}{0.65}

\newcounter{mytempeqcounter}

\newcommand{\ThankFour}{Part of this work was supported by the Research Promotion Foundation, Cyprus under the project KOYLTOYRA/BP-NE/0613/04 ``Full-Duplex Radio: Modeling, Analysis and Design (FD-RD)''.}

\title{Full-Duplex Radio for Uplink/Downlink Transmission with Spatial Randomness}

\author{\authorblockN{\AuthorOne,\:\AuthorTwo,\: \AuthorThree,\: and \AuthorFour}\\
\small{
$^\dag$Faculty of  Engineering, Shahrekord University, Iran (e-mail: m.a.mohammadi@eng.sku.ac.ir)\\
$^\S$Department of Electrical and Electronic Engineering, University of Peradeniya, Sri Lanka (e-mail: himal@ee.pdn.ac.lk)\\
$^\ddag$Department of Electrical and Computer Engineering,University of Cyprus, Cyprus (e-mail: krikidis@ucy.ac.cy)\\
$^*$Department of Electrical and Computer Engineering, University of Alberta, Canada (e-mail: chintha@ece.ualberta.ca)
}}\normalsize

\begin{document}

\maketitle
\thispagestyle{empty}

\vspace{-1em}
\begin{abstract}
We consider a wireless system with a full-duplex (FD) access point (AP) that transmits to a scheduled user in the downlink (DL) channel, while receiving data from an user in the uplink (UL) channel at the same time on the same frequency. In this system, loopback interference (LI) at the AP and inter user interference between the uplink (UL) user and  downlink (DL)  user can cause performance degradation. In order to characterize the effects of LI and inter user interference, we derive closed-form expressions for the outage probability and achievable sum rate of the system. In addition an asymptotic analysis that reveals insights into the system behavior and performance degradation is presented. Our results indicate that under certain conditions, FD transmissions yield  performance gains over half-duplex  (HD)  mode of operation.
\let\thefootnote\relax\footnotetext{\ThankFour}
\end{abstract}
\vspace{-0.5em}
\section{Introduction}
Due to the exponential growth of wireless traffic, spectral efficiency improvements achievable from   transmitting  while receiving are highly beneficial~\cite{Margetts:2007,Sabharwal:JSac:2014}. Traditionally, this was achieved by the separation of the transmit and receive carrier frequency. However, if a wireless radio node can only transmit or receive at a given time and frequency, a loss of efficiency  from a channel resource perspective must be expected. A promising solution that can be employed to avoid the loss of spectral efficiency is the full-duplex (FD) technology\cite{Duarte:PhD:dis,Riihonen:spawc:2009,Riihonen:TSP:2011,Khojastepour:Mobicom:2012,Sachin:NSDI:2014}.

Since the loopback interference (LI) caused by a node that is both transmitting and receiving at the same time can be overwhelming, up until now FD operation was considered practically unrealistic. This perception has been challenged due to the recent advances in antenna design and analog/digital signal processing. To this end, several recent works have described single and multiple antenna FD system designs largely made possible through new LI cancellation techniques~\cite{Duarte:PhD:dis,Khojastepour:Mobicom:2012,Ngo:JSAC:2014}. The implementation of single antenna FD technology with LI cancellation was demonstrated in~\cite{Duarte:PhD:dis}. A multiple-input multiple-output (MIMO) FD implementation (MIDU) was presented in~\cite{Khojastepour:Mobicom:2012}, while~\cite{Sachin:NSDI:2014} reported design and implementation of an in-band WiFi-PHY based FD MIMO system. In~\cite{Ngo:JSAC:2014} a massive MIMO FD relay system with spatial LI mitigation and optimum power allocation was investigated.

An interesting application of FD communications is simultaneous uplink and downlink transmission in wireless systems such as WiFi and cellular networks~\cite{Sanjay:CISS:2013,Girnyk:2013,Sachin:NSDI:2014}. However, such transmissions introduce LI and internode interference in the network as downlink transmission will be affected by the LI and the uplink user will interfere with the downlink reception. Therefore, in the presence of such interference, it is not clear whether FD applied to uplink/downlink user settings can bring performance benefits. In order to answer this question, several works in the literature have presented useful results. In~\cite{Sanjay:CISS:2013} a FD cellular analytical model based on stochastic geometry was used to derive the sum capacity of the system. However,~\cite{Sanjay:CISS:2013} assumed perfect LI cancellation and therefore, the effect of LI is not included in the results. In~\cite{DBeiYin:ACSSC} the combination of FD and  massive MIMO was considered for simultaneous uplink/downlink cellular communication. The information theoretic study presented in~\cite{Achaleshwar:acssc:DS13}, has investigated the rate gain achievable in a FD uplink/downlink network with internode interference management techniques. The application of FD radios for a single small cell scenario was considered in~\cite{Panwar:ICC14}. Specifically in this work, the conditions where FD operation provides a throughput gain compared to HD and the corresponding throughput results using simulations were presented. In~\cite{tspNguyen:2013}, joint precoder designs to optimize the spectral and energy efficiency of a FD multiuser MIMO system were presented. However~\cite{Achaleshwar:acssc:DS13,DBeiYin:ACSSC, tspNguyen:2013} considered fixed user settings for performance analysis and as such the effect of interference due to distance, particularly relevant for wireless networks with spatial randomness, is ignored.

In this paper, we consider a wireless network scenario in which a FD infrastructure node is communicating with half-duplex (HD) spatially random user terminals to support simultaneous uplink and downlink transmissions. Our contributions are summarized as follows:
\begin{itemize}
\item We take both LI and inter user inference into account and derive exact expressions for the outage probability and achievable sum rate of the system. Moreover, to highlight the system  behavior  and shed insights into the performance degradation, an asymptotic analysis is also presented.

\item We have compared the sum rate performance of the system for FD and HD modes of operation at the AP to elucidate the signal-to-noise ratio regions where the former mode of operation outperforms the latter mode of operation. Moreover, our results indicate that different power levels at the AP and UL user has a significant adverse effect to lower the sum rate in the HD mode of operation than the FD counterpart.
\end{itemize}
\section{System Model}\label{sec:system model and assumption}
Consider a single cell wireless system with an access point (AP), where data to the users in the DL channel, and data from users in the UL channel are transmitted and received at the same time on the same frequency.
All users are located in a circular area with radius $R_c$ and the AP is located at the center.
We assume that users are equipped with a single antenna, while the AP is equipped with two antennas (one antenna is used to transmit in the DL channel while the other antenna is used for UL channel reception). In the sequel we use subscript-$\Us$ for the UL user, subscript-$\Ds$ for the DL user, and subscript-$\AP$ for the AP. Similarly, we will use subscript-$\AP\AP$, subscript-$\AP{\Ds}$, subscript-${\Us}{\Ds}$, and subscript-${\Us}\AP$ to denote the AP-to-AP, AP-to-DL user, UL user-to-DL user, and UL user-to-AP channels, respectively.

Let $\Phi_{\Ds}$ be a two-dimensional homogeneous Poisson point process (PPP) with density $\lambda_{\Ds}$ that characterizes the spatial distribution of the DL users over $\mathbb{R}^2$. To obtain the most essential features, we consider the widely used Poisson bipolar model~\cite{Blaszczyszyn:book:2010} and assume that the UL users are located at a fixed distance $d$ in a random direction of angle $\theta$ from the DL users. The results obtained thus can be interpreted as the performance of networks with random link distances conditioned on the link distance having a certain value. The AP selects a DL user that is physically nearest to it. We use the terms ``nearest DL user'' and ``scheduled DL user'' interchangeably throughout the paper to refer to this user. In next generation ultra-dense networks, each user will be in the coverage area of an AP and can be considered as a most nearest user~\cite{Andrews:MCOM:2013}. Selection of a nearest user also serves as a practical consideration for FD implementation since transmitting very high power signals towards distant periphery users in order to guarantee a quality-of-service can cause overwhelming LI at the receive side of the AP. Moreover, as a benchmark comparison we also consider the random user selection (RUS) in Section~\ref{sec:Numerical results}. In RUS method the AP randomly selects one of all candidate DL users with equal probability.

We assume that the links in the network experience both large-scale path loss effects and small-scale Rayleigh fading phenomenon. For the large-scale path loss, we assume the standard singular path loss model, $\ell(x,y)=\|x-y\|^{-\alpha}$, where $\alpha \geq 2$ denotes the path-loss exponent and $\|x-y\|$ is the Euclidean distance between two nodes.

The received power at a typical DL user located at point $x_{\Ds}$ from the AP is $P_{\AP} h_{\AP{\Ds}} \ell(x_{\Ds})$. It is worth mentioning that the scheduled UL user, located at $x_{\Us}$, is served by receive antenna from AP at the same time, and it lacks coordination with concurrent active DL users. Therefore, the signal-to-interference-plus-noise ratio (SINR) of the typical DL user associated with the AP can be expressed as
\begin{align}\label{eq:SINR: downlonk user:single antenna case}
\SINRd=\frac{P_{\AP} h_{\AP\Ds}  \ell(x_{\Ds})}
{P_{\Us} h_{ \Us\Ds} \ell( x_{\Us},x_{\Ds}) + \Sn},
\end{align}
where $P_{\Us}$ denotes the transmit power of the UL user in UL channel and $\Sn$ is the constant additive noise power. On the other hand, received power at the AP from the active UL user is $P_{\Us} h_{\Us\AP} \ell(x_{\Us} )$. Due to the FD mode of operation, the receive antenna of the AP will receive a LI from its transmit antenna. Hence, the resulting SINR expression at the AP can be written as
\begin{align}\label{eq:SINR at AP in uplink}
\SINRAP=
\frac{P_{\Us} h_{\Us\AP}  \ell(x_{\Us} )}
{P_{\AP} h_{\AP\AP}  + \Sn},
\end{align}
where $h_{\AP\AP}$ denotes the LI channel at the AP. In order to mitigate the adverse effects of self-interference on system performance, an interference cancellation scheme (i.e. analog/digital cancellation) can be used at the AP and we model the residual LI channel with Rayleigh fading assumption since the strong line-of-sight component can be estimated and removed~\cite{Riihonen:spawc:2009,Riihonen:TSP:2011,Himal:FD:JWCOM}. Since each implementation of a particular analog/digital LI cancellation scheme can be characterized by a specific residual power, a parameterization by $h_{\AP\AP}$  satisfying $\E{|h_{\AP\AP}|^2}=\Sap$ allows these effects to be studied in a generic way~\cite{Riihonen:TWC:2011,Himal:FD:JWCOM}.

In order to facilitate the ensuing analysis, we now set up a polar coordinate system in which the origin is at the AP and the scheduled DL user is at $x_{\Ds}=(r,0)$. Therefore, according to the bipolar poisson model, we have $\ell(x_{\Us}) = (r^2+d^2-2rd\cos\theta)^{-\alpha/2}$.
In the following, we will need the exact knowledge of the spatial distribution of the $\ell( x_{\Us})$ in terms of $r$ and $\theta$. Since we assume that nearest DL user is scheduled for downlink transmission, $x_{\Ds}$ denotes the distance between the AP and the nearest DL user. Therefore, the probability distribution function (pdf) of the nearest distance $x_{\Ds}$ for the homogeneous
PPP $\Phi_{\Ds}$ with intensity $\lambda_{\Ds}$ is given by~\cite{Haenggi:IT:2005}
\vspace{-0.2em}
\begin{align}\label{eq:pdf of the nearst distance}
 f_{r}(r)= 2\pi\lambda_{\Ds}r e^{-\lambda_{\Ds}\pi r^2},~\qquad r\geq0.
\end{align}
Moreover, angular distribution is uniformly distributed over $[0~2\pi]$ i.e., $f_{\theta}(\theta )=1/ 2\pi$.
\section{Performance Analysis}
In this section, we derive analytical outage probability and sum rate expressions. First, we obtain the cumulative distribution function (cdf) of the SINRs, $\SINRd$ and $\SINRAP$. Next exploiting the cdf result, the outage probability and sum rate are derived.
\vspace{-1 em}
\subsection{The SINR cdfs at the AP and DL User}
The cdf of the $\SINRAP$ and the $\SINRd$ are respectively expressed by
\vspace{-0.3em}
\begin{align}\label{eq:cumulative distribution function of SINR}
F_{\SINRi}(z) = 1-\Prob(\SINRi \geq z),
\end{align}
for $i\in\{\AP, \Ds\}$ and $z\geq0$, where $\Prob(\cdot)$ denotes the probability. We now proceed to derive exact expressions for $F_{\SINRAP}(z)$ and $F_{\SINRd}(z)$, respectively.
 \newline \newline
\textbf{\emph{{Uplink Transmission:}}}
Using~\eqref{eq:cumulative distribution function of SINR}, the $\SINRAP$ cdf can be written as
\vspace{-0.5em}
\begin{align}\label{eq:cumulative distribution function of SINRu}
&F_{\SINRAP}(z)
\!= 1\!-\!  \Erth{\Prob \left(h_{\Us\AP}\! \geq\!
\frac{z}{P_{\Us} \ell(x_{\Us})}\! [P_{\AP} h_{\AP\AP} \!+\!\Sn] \Big| h_{\AP\AP}\right)}
\nonumber\\
&\quad\!=\!1\!-\!
\Erth{
\frac{e^{-z\frac{\Sn}{P_{\Us}}(r^2+d^2-2rd\cos\theta)^{\alpha/2} }}
{1 +z\frac{P_{\AP}}{P_{\Us}}\Sap (r^2\!+\!d^2\!-\!2rd\cos\theta)^{\alpha/2} }}\!,
 \end{align}
where the second equality in \eqref{eq:cumulative distribution function of SINRu} is due to $h_{\AP\AP}\sim \exp(1/\Sap)$. With the aid of the pdfs for $r$ and $\theta$, we can express $F_{\SINRAP}(z)$ as $F_{\SINRAP}(z)=$
\vspace{-0.2em}
\begin{align}\label{eq: cdf of SINRd polar coordinates general}
1-
\lambda_{\Ds}
\int_{0}^{R_c}
\int_{0}^{2\pi}\frac{r e^{-\lambda_{\Ds}\pi r^2}e^{-\frac{z\Sn}{P_{\Us}} (r^2+d^2-2rd\cos\theta)^{\frac{\alpha}{2}}}}
{1 +z \frac{P_{\AP}}{P_{\Us}} \Sap(r^2+d^2-2rd\cos\theta)^{\frac{\alpha}{2}}}d\theta dr.
\end{align}
In general, the double integral in~\eqref{eq: cdf of SINRd polar coordinates general} does not admit a simple analytical solution for an arbitrary value of $\alpha$. However, the cdf can be conveniently evaluated using numerical integration. The following propositions characterize $F_{\SINRAP}(z)$ for the interference-limited scenario with $\Sn=0$ and special cases\footnote{Note that $\alpha=2$ and $\alpha=4$ correspond to free space propagation and typical rural areas, respectively, and constitute useful bounds for practical propagation conditions.}; $\alpha=2$ and $\alpha=4$.
\begin{proposition}
The cdf of $\SINRAP$, for $\alpha=2$ is given by
\vspace{-0.3em}
\begin{align}\label{eq: cdf of SINRd integral over r: alpha 2 Final}
&F_{\SINRAP}(z)
=1-\frac{P_{\Us}}{P_{\AP }} \frac{ 8\pi\lambda_{\Ds} }{z\Sap}
~\sum_{k=0}^{\infty}\frac{(-2\pi\lambda_{\Ds} c)^k}{\Gamma(k+1)}
\sqrt{c}\left(\frac{b-\sqrt{c}\varrho}{c - b^2}
\right)^{k+1}\nonumber\\
&~\times
F_{1}\left(\!k+1;\!k+1,\!k+1;\!k+2;\!\frac{b\!-\!\sqrt{c}\varrho}{b+\sqrt{c}},\frac{b\!-\!\sqrt{c}\varrho}{b-\sqrt{c}}\right),
\end{align}
where $c =\left(\frac{P_{\Us}}{P_{\AP}} \frac{ 1}{z\Sap}+d^2\right)^2 $, $b =\frac{P_{\Us}}{P_{\AP}} \frac{ 1}{z\Sap}-d^2$,  $\varrho=(\sqrt{R_c^4+bR_c^2+c}-\sqrt{c})/R_c^2$, $\Gamma(\cdot)$ is the Gamma function~\cite[Eq. (8.310.1)]{Integral:Series:Ryzhik:1992}, and $F_1(\cdot;\cdot,\cdot;\cdot;\cdot,\cdot)$ is the Appell hypergeometric function~\cite[Eq. (5.8.5)]{Transcendental:book}.
\end{proposition}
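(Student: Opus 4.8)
The plan is to specialize \eqref{eq: cdf of SINRd polar coordinates general} to $\Sn=0$, $\alpha=2$, collapse the double integral to a single one over $r$, and then match it term by term to the Euler integral of an Appell function. Write $A:=z\,P_{\AP}\Sap/P_{\Us}$. With $\Sn=0$ and $\alpha=2$ the inner integrand of \eqref{eq: cdf of SINRd polar coordinates general} is $\bigl((1+A(r^2+d^2))-2Ard\cos\theta\bigr)^{-1}$, and since $1+A(r^2+d^2)-2Ard=1+A(r-d)^2>0$ the elementary identity $\int_{0}^{2\pi}(p-q\cos\theta)^{-1}\,d\theta=2\pi(p^2-q^2)^{-1/2}$ applies with $p=1+A(r^2+d^2)$, $q=2Ard$. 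A short computation gives $p^2-q^2=(1+A(r^2+d^2))^2-4A^2r^2d^2=A^2(r^4+br^2+c)$ with precisely the $b$ and $c$ of the statement; this is the step that pins down those two constants. Hence $F_{\SINRAP}(z)=1-\tfrac{2\pi\lambda_{\Ds}}{A}\int_{0}^{R_c}r\,e^{-\lambda_{\Ds}\pi r^2}(r^4+br^2+c)^{-1/2}\,dr$, and the change of variable $u=r^2$ turns this into $F_{\SINRAP}(z)=1-\tfrac{\pi\lambda_{\Ds}}{A}\int_{0}^{R_c^2}e^{-\lambda_{\Ds}\pi u}(u^2+bu+c)^{-1/2}\,du$.

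Next I would expand $e^{-\lambda_{\Ds}\pi u}=\sum_{k\ge 0}\tfrac{(-\lambda_{\Ds}\pi)^{k}}{k!}u^{k}$ and integrate term by term. This interchange is justified: the power series converges uniformly on the compact interval $[0,R_c^2]$, while $(u^2+bu+c)^{-1/2}$ is bounded there because the quadratic has strictly negative discriminant $b^2-4c<0$ (indeed $4c-b^2=(1/A+3d^2)(3/A+d^2)>0$), so it stays uniformly positive and the term integrals are absolutely summable. What is left is the family of moment integrals $J_k:=\int_{0}^{R_c^2}u^{k}(u^2+bu+c)^{-1/2}\,du$.

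The heart of the argument is the closed form for $J_k$. I would use the rationalizing (Euler) substitution $\sqrt{u^2+bu+c}=\sqrt{c}+us$, equivalently $u=(2\sqrt{c}\,s-b)/(1-s^2)$, which gives $du/\sqrt{u^2+bu+c}=2\,ds/(1-s^2)$ and maps $u=0$ to $s=b/(2\sqrt{c})$ and $u=R_c^2$ to $s=\varrho$, the quantity defined in the statement. Using $u^{k}=(2\sqrt{c}\,s-b)^{k}(1-s^2)^{-k}$ and $1-s^2=(1-s)(1+s)$ this becomes $J_k=2\int (2\sqrt{c}\,s-b)^{k}(1-s)^{-(k+1)}(1+s)^{-(k+1)}\,ds$, and the affine shift $\xi=s-b/(2\sqrt{c})$ (under which $2\sqrt{c}\,s-b=2\sqrt{c}\,\xi$ and $1\mp s=(1\mp b/(2\sqrt{c}))\mp\xi$) puts the integrand in exactly the shape $\xi^{k}(1-\beta\xi)^{-(k+1)}(1-\gamma\xi)^{-(k+1)}$. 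The standard Euler-type integral representation of the Appell function, $\int_{0}^{X}\xi^{k}(1-\beta\xi)^{-(k+1)}(1-\gamma\xi)^{-(k+1)}\,d\xi=\tfrac{X^{k+1}}{k+1}F_1(k+1;k+1,k+1;k+2;\beta X,\gamma X)$, then evaluates $J_k$ in terms of $F_1$ with the arguments appearing in the statement, and reassembling the prefactors $\pi\lambda_{\Ds}/A$, the series coefficients $(-\lambda_{\Ds}\pi)^{k}/k!$, the constants $(2\sqrt{c})^{k}$ and $(1-b^2/4c)^{-(k+1)}$ produced by the substitution, the factor $1/(k+1)$, and $1/A=P_{\Us}/(P_{\AP}z\Sap)$ yields \eqref{eq: cdf of SINRd integral over r: alpha 2 Final}.

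I expect the evaluation of $J_k$ to be the only real obstacle. The point is that the naive route---factoring $(u^2+bu+c)^{-1/2}=c^{-1/2}(1-u/\rho)^{-1/2}(1-u/\overline{\rho})^{-1/2}$ over the complex roots $\rho,\overline{\rho}$---would produce an Appell function with parameters $\tfrac12,\tfrac12$ rather than $k+1,k+1$; it is precisely the Euler substitution that, by trading the inverse square root of a quadratic for a product of \emph{linear} factors raised to the power $k+1$, lets the monomial $u^{k}$ combine into the $F_1(k+1;k+1,k+1;k+2;\cdot,\cdot)$ of the statement. The rest is bookkeeping---carrying the constants through the substitution and checking that the integral representation of $F_1$ is applicable, i.e.\ that no pole of the transformed integrand is crossed; this holds because $u^2+bu+c$ has no real zero, so $1-s^2\ne0$ all along the path of integration.
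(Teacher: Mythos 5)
Your proposal follows essentially the same route as the paper's proof: specialize \eqref{eq: cdf of SINRd polar coordinates general} to $\Sn=0$, collapse the angular integral via $\int_0^{2\pi}(p-q\cos\theta)^{-1}d\theta=2\pi(p^2-q^2)^{-1/2}$, substitute $\upsilon=r^2$, Taylor-expand the exponential, and evaluate the resulting moment integrals with the rationalizing Euler substitution that yields the Euler-integral form of the Appell $F_1$. The only discrepancy is a factor of two in the bookkeeping: with the stated $b$ one gets $p^2-q^2=A^2(\upsilon^2+2b\upsilon+c)$, not $A^2(\upsilon^2+b\upsilon+c)$, so the lower Euler limit is $b/\sqrt{c}$ as in the paper rather than $b/(2\sqrt{c})$ --- a slip that does not affect the method.
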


\begin{proof}
Following~\eqref{eq: cdf of SINRd polar coordinates general}, the $F_{\SINRd}(z)$ corresponding to $\alpha=2$ and $\Sn=0$ is given by $F_{\SINRAP}(z)=$
\vspace{-0.3em}
\begin{align}
1-\frac{P_{\Us}}{P_{\AP}} \frac{ 1}{z\Sap}
\int_{0}^{R_c}
\int_{0}^{2\pi}\frac{\lambda_{\Ds}r e^{-\lambda_{\Ds}\pi r^2}}
{\frac{P_{\Us}}{P_{\AP}} \frac{ 1}{z\Sap} + r^2+d^2-2rd\cos\theta}d\theta dr.\nonumber
\end{align}
With the help of~\cite[Eq. (3.661.4)]{Integral:Series:Ryzhik:1992}, and next making the change of variable $r^2=\upsilon$, we obtain
\vspace{-0.3em}
\begin{align}\label{eq: cdf of SINRd integral over r}
&F_{\SINRAP}(z)
=\!\!1-\frac{P_{\Us}}{P_{\AP }} \frac{ \pi\lambda_{\Ds} }{z\Sap}
\int_{0}^{R_c^2}\!\!
\frac{ e^{-\lambda_{\Ds}\pi \upsilon}}
{\sqrt{ \upsilon^2 + 2 b\upsilon + c}}d\upsilon.
\end{align}
To the best of our knowledge, the integral in~\eqref{eq: cdf of SINRd integral over r} does not admit a
closed-form solution. In order to proceed, we use Taylor series representation~\cite[Eq. (1.211.1)]{Integral:Series:Ryzhik:1992} for term $e^{-\lambda_{\Ds}\pi \upsilon}$, and write
\vspace{-0.3em}
\begin{align}\label{eq: cdf of SINRd integral over r: Taylor series}
F_{\SINRAP}(z)
&=1-\frac{P_{\Us}}{P_{\AP }} \frac{ \pi\lambda_{\Ds} }{z\Sap}\\
&~\times\sum_{k=0}^{\infty}\frac{(-\lambda_{\Ds}\pi)^k}{k!}
\!\!\int_{0}^{R_c^2}\!
\frac{  \upsilon^k}
{\sqrt{ \upsilon^2 + 2 b\upsilon + c}}d\upsilon.\nonumber
\end{align}
A change of variable $\sqrt{ \upsilon^2 + 2 b\upsilon + c}=\upsilon t+ \sqrt{c}$, and after some manipulations,~\eqref{eq: cdf of SINRd integral over r: Taylor series} can be expressed as
\vspace{-0.3em}
\begin{align}\label{eq: cdf of SINRd integral over r: Euler substitutions}
&F_{\SINRAP}(z)
=1-\frac{P_{\Us}}{P_{\AP }} \frac{ 4\pi\lambda_{\Ds} }{z\Sap}\nonumber\\
&\qquad\qquad\times\sum_{k=0}^{\infty}\frac{(-\lambda_{\Ds}\pi)^k}{k!}
\!\!\int_{\frac{b}{\sqrt{c}}}^{\varrho}\!
\frac{(b-\sqrt{c}t)^{k}}{ (t^2-1)^{k+1}}
dt.
\end{align}
Finally, using~\cite[Eq. (5.8.5)]{Transcendental:book}, we get the desired result given in~\eqref{eq: cdf of SINRd integral over r: alpha 2 Final}.
\end{proof}
\begin{proposition}\label{Prop:cdf of the SINRup alpha4}
The cdf of $\SINRAP$ for $\alpha=4$ is lower bounded as
\vspace{-0.3em}
\begin{align}\label{eq: proof of the cdf of SINRd alpha final}
&F_{\SINRAP}(z)
> 1-
 \sum_{k=0}^{\infty}\frac{(-1)^k(\lambda_{\Ds}\pi R_c^2)^{k+1}}{\Gamma(k+2)}\nonumber\\
&\qquad\times
{}_{2}F_{1} \left(\!1,\frac{k\!+\!1}{2},\frac{k\!+\!1}{2}+1,\!-z\Sap\frac{ P_{\AP}}{P_{\Us}}R_c^4\!\right)\!,
\end{align}
where ${}_{2}F_{1}(\cdot,\cdot;\cdot;\cdot)$ denotes the Gauss hypergeometric function defined in~\cite[Eq. (9.111)]{Integral:Series:Ryzhik:1992}.
\end{proposition}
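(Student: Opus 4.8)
The plan is to start from the integral representation \eqref{eq: cdf of SINRd polar coordinates general} of $F_{\SINRAP}(z)$, specialised to the interference-limited case $\Sn=0$ with $\alpha=4$, so that the integrand's denominator becomes $1+z\tfrac{P_{\AP}}{P_{\Us}}\Sap(r^2+d^2-2rd\cos\theta)^2$. Unlike the $\alpha=2$ case of Proposition~1, the \emph{squared} link distance prevents the inner $\theta$-integral from being evaluated in elementary form, and this coupling of $r$ and $\theta$ is the main obstacle. To circumvent it I would decouple the two variables by bounding the UL-user-to-AP distance from below by the AP-to-scheduled-DL-user distance, i.e. replace $(r^2+d^2-2rd\cos\theta)^2$ by $r^4$ in the denominator. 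This is the delicate step, and it is why the statement is an inequality rather than an equality; it is supported by the fact that, averaging over the random direction, $\mathbb{E}_{\theta}\{r^2+d^2-2rd\cos\theta\}=r^2+d^2\ge r^2$, so the scheduled (nearest) DL user is on average closer to the AP than its associated UL user, while the weight $r e^{-\lambda_{\Ds}\pi r^2}$ concentrates the mass on small $r$, where this is most comfortably satisfied.

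Once this replacement is made the $\theta$-integral collapses to a factor $2\pi$, leaving $F_{\SINRAP}(z)>1-2\pi\lambda_{\Ds}\int_{0}^{R_c} r e^{-\lambda_{\Ds}\pi r^2}\big(1+z\tfrac{P_{\AP}}{P_{\Us}}\Sap r^4\big)^{-1}dr$. Next I would substitute $\upsilon=r^2$ (so $r\,dr=\tfrac12 d\upsilon$) and expand $e^{-\lambda_{\Ds}\pi\upsilon}$ by its Taylor series \cite[Eq.~(1.211.1)]{Integral:Series:Ryzhik:1992}, interchanging summation and integration; this is legitimate since the exponential series converges uniformly on the compact interval $[0,R_c^2]$. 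The outcome is $F_{\SINRAP}(z)>1-\pi\lambda_{\Ds}\sum_{k=0}^{\infty}\frac{(-\lambda_{\Ds}\pi)^k}{k!}\int_{0}^{R_c^2}\upsilon^k\big(1+z\Sap\tfrac{P_{\AP}}{P_{\Us}}\upsilon^2\big)^{-1}d\upsilon$.

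Each remaining moment integral is of the standard form $\int_{0}^{u}x^{\mu-1}(1+p x^2)^{-1}dx=\frac{u^{\mu}}{\mu}\,{}_{2}F_{1}\big(1,\tfrac{\mu}{2};\tfrac{\mu}{2}+1;-p u^2\big)$, a consequence of the Euler integral representation of the Gauss hypergeometric function \cite[Eq.~(9.111)]{Integral:Series:Ryzhik:1992} that remains valid for all $p>0$ by analytic continuation even when $pu^2\ge 1$. Taking $\mu=k+1$, $u=R_c^2$ and $p=z\Sap P_{\AP}/P_{\Us}$ evaluates the $k$-th integral as $\frac{R_c^{2(k+1)}}{k+1}\,{}_{2}F_{1}\big(1,\tfrac{k+1}{2};\tfrac{k+1}{2}+1;-z\Sap\tfrac{P_{\AP}}{P_{\Us}}R_c^4\big)$. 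Finally I would collect the constants: the prefactor $\pi\lambda_{\Ds}\,(-\lambda_{\Ds}\pi)^k/k!$ multiplied by $R_c^{2(k+1)}/(k+1)$ equals $(-1)^k(\lambda_{\Ds}\pi R_c^2)^{k+1}/\Gamma(k+2)$, reproducing exactly the bound in \eqref{eq: proof of the cdf of SINRd alpha final}. Apart from the distance-decoupling step, every operation is a routine table look-up or bookkeeping, so the entire difficulty of the proposition is concentrated in justifying that one replacement.
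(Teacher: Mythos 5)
Your proposal follows essentially the same route as the paper's proof: the lower bound is obtained by setting $d=0$ (i.e., replacing $(r^2+d^2-2rd\cos\theta)^2$ by $r^4$ so the $\theta$-integral collapses), followed by the substitution $\upsilon=r^2$, a Taylor expansion of $e^{-\lambda_{\Ds}\pi\upsilon}$, and evaluation of the resulting moment integrals via the Euler-type integral representation of ${}_{2}F_{1}$ in \cite[Eq.~(9.111)]{Integral:Series:Ryzhik:1992}. The only cosmetic difference is that the paper first evaluates the inner $\theta$-integral in closed form for general $d$ before specialising to $d=0$, and --- like you --- it asserts rather than rigorously establishes the direction of the inequality induced by that replacement.
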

\begin{proof}
Following~\eqref{eq: cdf of SINRd polar coordinates general}, the $F_{\SINRAP}(z)$ corresponding to $\alpha=4$ and $\Sn=0$ can be written as
\vspace{-0.3em}
\begin{align}\label{eq: proof of the cdf of SINRd alpha 4 over r and theta}
&F_{\SINRAP}(z)
 =1-\frac{ 1}{z}
 \frac{ \lambda_{\Ds}P_{\Us}}{\Sap P_{\AP}}
\\
&~\times
\!\!\int_{0}^{R_c}
\!\int_{0}^{2\pi}\!\frac{re^{-\lambda_{\Ds}\pi r^2}}
{\frac{P_{\Us}}{P_{\AP}} \frac{ 1}{z\Sap}\! +\!(r^2+d^2-2rd\cos\theta)^{2}}d\theta dr\!.\nonumber
\end{align}
By using~\cite{Integral:Series:Ryzhik:1992}, the inner integral can be obtained as
\begin{align}\label{eq: proof of the cdf of SINRd alpha 4 over r}
&F_{\SINRAP}(z)
=1-\frac{ \sqrt{2}\pi}{z}
 \frac{ \lambda_{\Ds}P_{\Us}}{\Sap P_{\AP}}\\
&~\times
\!\!\!\int_{0}^{R_c}\!\!\!
\frac{re^{-\lambda_{\Ds}\pi r^2}}
{\sqrt{c_2(r) \!\!+ \!\!\sqrt{c_4(r)c_0(r)}}}
\left[\frac{1}{\sqrt{c_0(r)}}\!\! + \!\!\frac{1}{\sqrt{c_4(r)}}\right] dr\!,\nonumber
\end{align}
where $c_0(r)=b_0(r)-b_1(r)+b_2(r)$, $c_2(r)=b_0(r)-b_2(r)$, and $c_4(r)=b_0(r)+b_1(r)+b_2(r)$, with $b_0(r)=P_{\Us}/( P_{\AP}z\Sap) + (r^2+d^2)^2$, $b_1(r)=4rd(r^2+d^2)$, and $b_2(r)=4r^2d^2$. The integral in~\eqref{eq: proof of the cdf of SINRd alpha 4 over r} cannot be calculated analytically. However, we can simplify the above integral in the case of $d=0$. Hence, after a simple substitution $r^2=\upsilon$,~\eqref{eq: proof of the cdf of SINRd alpha 4 over r} can be written as
\begin{align}\label{eq: proof of the cdf of SINRd alpha 4 over r simplified}
&F_{\SINRAP}(z)
> 1-\frac{ \pi}{z}
 \frac{ \lambda_{\Ds}P_{\Us}}{\Sap P_{\AP}}
\int_{0}^{R_c^2}
\frac{e^{-\lambda_{\Ds}\pi \upsilon}}
{\upsilon^2+\frac{ P_{\Us}}{ P_{\AP}} \frac{1}{z\Sap}}d\upsilon.
\end{align}
In order to simplify \eqref{eq: proof of the cdf of SINRd alpha 4 over r simplified}, we adopt a series expansion of the
exponential term. Substituting the series expansion of $e^{-\lambda_{\Ds}\pi \upsilon}$ into the~\eqref{eq: proof of the cdf of SINRd alpha 4 over r simplified} yields
\begin{align}\label{eq: proof of the cdf of SINRd alpha 4 over r taylor}
&F_{\SINRAP}(z)
> 1-\frac{ 1}{z}
 \frac{ P_{\Us}}{\Sap P_{\AP}}\times\nonumber\\
 &\qquad\qquad
 \sum_{k=0}^{\infty}\frac{(-\lambda_{\Ds}\pi)^{k+1}}{k!}
\int_{0}^{R_c^2}
\frac{ \upsilon^k}
{\upsilon^2+\frac{ P_{\Us}}{ P_{\AP}} \frac{1}{z\Sap}}d\upsilon.
\end{align}
Let us denote $\beta=\frac{P_{\Us}}{ P_{\AP}} \frac{1}{z\Sap}$.
By making the change of variable $\left(\upsilon/R_c^2\right)^2=t$, we obtain
\begin{align}\label{eq: proof of the cdf of SINRd alpha 4 over r taylor}
F_{\SINRAP}(z)
&\!> \!1\!-\!
 \sum_{k=0}^{\infty}\frac{(-\lambda_{\Ds}\pi R_c^2)^{k+1}}{2k!}
\!\!\int_{0}^{1}\!\!
\frac{ t^{\frac{k-1}{2}}}
{1\!+\frac{R_c^4}{\beta}t}dt.
\end{align}
Now with the help of~\cite[Eq. (9.111)]{Integral:Series:Ryzhik:1992} the integral in \eqref{eq: proof of the cdf of SINRd alpha 4 over r taylor} can be solved to yield \eqref{eq: proof of the cdf of SINRd alpha final}.
\end{proof}

\textbf{\emph{Downlink Transmission:}}
Using~\eqref{eq:SINR: downlonk user:single antenna case} and \eqref{eq:cumulative distribution function of SINR}, the cdf of $\SINRd$ can be written as
\begin{align}\label{eq:cdf of SINRd first step}
&F_{\SINRd}(z)
=1-\EI{\Prob\left(P_{\AP} h_{\AP\Ds} r^{-\alpha}
 \geq z [I_{\Ds,\Us} + \Sn]\right)\big\vert r},\nonumber\\
&\qquad\qquad=1-
\EI{e^{-\frac{z}{P_{\AP}}r^{\alpha} [I_{\Ds,\Us} +\Sn]}\big\vert r}.
\end{align}
Note that in our system model the randomness of the $I_{\Ds,\Us}$ is due to the fading power envelope $h_{\Ds \Us}$. As such, $F_{\SINRd}(z)$ can be written as
\begin{align}\label{eq:cdf of SINRd sec step}
F_{\SINRd}(z)
&=1-\Er{e^{-\frac{z}{P_{\AP}}\Sn r^{\alpha} }
\int_{0}^{\infty}
 e^{-\left(\frac{r}{d}\right)^{\alpha}\frac{P_{\Us}}{P_{\AP} }zx}   e^{-x}dx},\nonumber\\
&=1-
2\pi\lambda_{\Ds}\int_{0}^{R_c}r\frac{e^{-z\frac{\Sn}{P_{\AP}}r^{\alpha} } e^{-\lambda_{\Ds}\pi r^2} }
{1 +\left(\frac{r}{d}\right)^{\alpha}\frac{P_{\Us}}{P_{\AP} }z}dr.
\end{align}
Eq.~\eqref{eq:cdf of SINRd sec step} does not have a closed-form solution. However, an expression for $F_{\SINRd}(z)$ can be derived in the interference-limited case  in Proposition~\ref{Prop:cdf of the SINRd}.
\begin{proposition}\label{Prop:cdf of the SINRd}
The cdf of $\SINRd$, can be expressed as
\vspace{-0.2em}
\begin{align}\label{eq: final cdf of SIR DL}
&F_{\SINRd}(z)
= 1-
 \sum_{k=0}^{\infty}\frac{(-1)^k(\lambda_{\Ds}\pi R_c^2)^{k+1}}{\Gamma(k+2)}\\
&\quad\times
{}_{2}F_{1} \left(\!1,\frac{2(k\!+\!1)}{\alpha},\frac{2(k\!+\!1)}{\alpha}+1,\!-z\frac{P_{\Us}}{ P_{\AP}}\left(\frac{R_c}{d}\right)^{\alpha}\right)\!.\nonumber
\end{align}
\end{proposition}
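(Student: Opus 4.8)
The plan is to start from the interference-limited form of the downlink cdf already derived in~\eqref{eq:cdf of SINRd sec step}; setting $\Sn=0$ there leaves
\begin{align}
F_{\SINRd}(z)=1-2\pi\lambda_{\Ds}\int_{0}^{R_c}\frac{r\,e^{-\lambda_{\Ds}\pi r^2}}{1+\left(\frac{r}{d}\right)^{\alpha}\frac{P_{\Us}}{P_{\AP}}z}\,dr,
\end{align}
which has the same structure as the integral handled in the proof of Proposition~\ref{Prop:cdf of the SINRup alpha4}, so the same series-plus-hypergeometric machinery carries over. First I would expand the Gaussian factor $e^{-\lambda_{\Ds}\pi r^2}$ via the Taylor series~\cite[Eq. (1.211.1)]{Integral:Series:Ryzhik:1992} and interchange summation and integration --- legitimate since the exponential series converges absolutely and uniformly on the compact interval $[0,R_c]$. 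What is then left is to evaluate, for each $k$, the integral $\int_{0}^{R_c} r^{2k+1}/(1+\mu r^{\alpha})\,dr$, where I abbreviate $\mu=\frac{P_{\Us}}{P_{\AP}}z\,d^{-\alpha}$.

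Next I would normalise the variable and recast this as a canonical hypergeometric integral. The substitution $s=(r/R_c)^{\alpha}$ gives $\int_{0}^{R_c} r^{2k+1}/(1+\mu r^{\alpha})\,dr=\frac{R_c^{2k+2}}{\alpha}\int_{0}^{1} s^{\frac{2(k+1)}{\alpha}-1}/(1+\mu R_c^{\alpha}s)\,ds$, and I would then invoke the Euler integral representation of the Gauss hypergeometric function --- the $a=1$, $c=b+1$ specialisation of~\cite[Eq. (9.111)]{Integral:Series:Ryzhik:1992}, namely $\int_{0}^{1}s^{b-1}(1+xs)^{-1}\,ds=\frac{1}{b}\,{}_{2}F_{1}(1,b;b+1;-x)$ --- with $b=\frac{2(k+1)}{\alpha}$ and $x=\mu R_c^{\alpha}=z\frac{P_{\Us}}{P_{\AP}}(R_c/d)^{\alpha}$. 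This produces $\int_{0}^{R_c} r^{2k+1}/(1+\mu r^{\alpha})\,dr=\frac{R_c^{2k+2}}{2(k+1)}\,{}_{2}F_{1}\!\left(1,\frac{2(k+1)}{\alpha};\frac{2(k+1)}{\alpha}+1;-z\frac{P_{\Us}}{P_{\AP}}(R_c/d)^{\alpha}\right)$.

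Finally I would reassemble the series: multiplying the $k$-th term by the Taylor prefactor $2\pi\lambda_{\Ds}(-\lambda_{\Ds}\pi)^{k}/k!$ and simplifying, $2\pi\lambda_{\Ds}\cdot\frac{(-\lambda_{\Ds}\pi)^{k}}{k!}\cdot\frac{R_c^{2k+2}}{2(k+1)}=\frac{(-1)^{k}(\lambda_{\Ds}\pi R_c^{2})^{k+1}}{\Gamma(k+2)}$, which is exactly the coefficient appearing in~\eqref{eq: final cdf of SIR DL}, completing the derivation. The one genuinely delicate point --- the step I would justify explicitly --- is the term-by-term integration together with the subsequent identification of the finite-range integral with a hypergeometric series; since every manipulation takes place on the bounded domain $[0,R_c]$, dominated convergence settles it. Note also that, unlike the $\alpha=4$ bound of Proposition~\ref{Prop:cdf of the SINRup alpha4}, no $d=0$ reduction is needed here, because the inter-user distance $d$ enters $\SINRd$ only through the single factor $(r/d)^{\alpha}$; hence~\eqref{eq: final cdf of SIR DL} holds exactly for every $\alpha\ge2$.
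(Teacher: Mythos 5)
Your proposal is correct and follows essentially the same route the paper intends: the paper omits this proof as ``similar to Proposition~\ref{Prop:cdf of the SINRup alpha4},'' whose argument is exactly your Taylor expansion of $e^{-\lambda_{\Ds}\pi r^2}$, term-by-term integration, and reduction of each resulting integral to a ${}_2F_1$ via the Euler representation, and your coefficient bookkeeping reproduces \eqref{eq: final cdf of SIR DL} exactly. Your closing remark is also accurate: since the inter-user interference enters only through the fixed factor $(r/d)^{\alpha}$ there is no angular integral, so, unlike the uplink $\alpha=4$ case, no $d=0$ relaxation is required and the expression is exact for all $\alpha\geq 2$.
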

\begin{proof}
The proof, similar to \emph{Proposition~\ref{Prop:cdf of the SINRup alpha4}}, is omitted.
\end{proof}
\vspace{-1.2em}
\subsection{Outage Probability}
The outage probability is an important quality-of-service metric defined
as the probability that $\SINRi$, $i\in\{\AP, \Ds\}$,
drops below an acceptable SINR threshold, $\gamma_{th}$. We now present the following corollaries to establish the DL and UL user outage
probability valid in the interference-limited case (i.e., $\Sn=0$).
\begin{Corollary}
The UL user outage probability with $\alpha=2$ is given by substituting $z=\gamma_{th}$ into~\eqref{eq: cdf of SINRd integral over r: alpha 2 Final}. Moreover, for  $\alpha=4$, the outage probability is lower bounded by substituting $z=\gamma_{th}$ into~\eqref{eq: proof of the cdf of SINRd alpha final}.
\end{Corollary}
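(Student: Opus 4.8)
The plan is to obtain the Corollary directly from the definition of outage probability together with the cdf expressions already established in Proposition~1 and Proposition~\ref{Prop:cdf of the SINRup alpha4}. First I would recall that, by definition, the UL user is in outage precisely when $\SINRAP$ falls below the threshold $\gamma_{th}$, so that $\PoutAP=\Prob(\SINRAP<\gamma_{th})$. Since $F_{\SINRAP}$ is the cdf of $\SINRAP$ introduced in~\eqref{eq:cumulative distribution function of SINR}, the right-hand side is exactly $F_{\SINRAP}(\gamma_{th})$; hence evaluating the UL outage probability reduces to evaluating the previously derived cdf at the single point $z=\gamma_{th}$.

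For $\alpha=2$, Proposition~1 supplies the exact closed form for $F_{\SINRAP}(z)$ in~\eqref{eq: cdf of SINRd integral over r: alpha 2 Final}, so the substitution $z=\gamma_{th}$ immediately yields the stated expression for $\PoutAP$. For $\alpha=4$, Proposition~\ref{Prop:cdf of the SINRup alpha4} provides only the lower bound~\eqref{eq: proof of the cdf of SINRd alpha final} on $F_{\SINRAP}(z)$; since that inequality holds for every $z\ge 0$, setting $z=\gamma_{th}$ transfers it verbatim to $\PoutAP$ and produces the claimed lower bound on the UL outage probability.

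There is essentially no obstacle here, as the statement is a one-line corollary of the earlier propositions. The only point deserving attention is the direction of the inequality in the $\alpha=4$ case: because $F_{\SINRAP}$ is bounded from below in Proposition~\ref{Prop:cdf of the SINRup alpha4} --- the bound being a consequence of the $d=0$ simplification used inside that proof --- and $\PoutAP=F_{\SINRAP}(\gamma_{th})$, the resulting estimate for $\PoutAP$ is likewise a lower bound, in agreement with the statement. I would keep the write-up to a single sentence invoking the definition of outage and the two propositions, which is presumably why the authors state it as a corollary rather than a proposition.
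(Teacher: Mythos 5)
Your proposal is correct and matches the paper's (implicit) reasoning exactly: the Corollary is an immediate substitution $z=\gamma_{th}$ into the cdfs of Propositions~1 and~\ref{Prop:cdf of the SINRup alpha4}, since $\PoutAP=\Prob(\SINRAP<\gamma_{th})=F_{\SINRAP}(\gamma_{th})$, with the lower-bound direction for $\alpha=4$ inherited verbatim from Proposition~\ref{Prop:cdf of the SINRup alpha4}. The paper offers no separate proof, and none is needed beyond what you wrote.
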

\begin{Corollary}
The UL user outage probability is given by substituting $z=\gamma_{th}$ into~\eqref{eq: final cdf of SIR DL}.
\end{Corollary}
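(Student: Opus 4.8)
The plan is to obtain the statement directly from the definition of outage probability together with the closed-form cdf already established in \emph{Proposition~\ref{Prop:cdf of the SINRd}}, with no additional computation. By the definition recalled above, the outage probability of the UL user is the probability that its SINR drops below the threshold, i.e. $\Prob(\SINR < \gamma_{th})$. Comparing this with the cdf definition~\eqref{eq:cumulative distribution function of SINR}, namely $F_{\SINRi}(z)=1-\Prob(\SINRi\geq z)$, we immediately have $F_{\SINRi}(z)=\Prob(\SINRi<z)$, so the outage probability is exactly the relevant SINR cdf evaluated at $z=\gamma_{th}$.

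First I would invoke this equivalence to identify the UL user outage probability with the value of the appropriate cdf at the threshold. Since the closed-form expression for that cdf in the interference-limited case ($\Sn=0$) has already been derived and is stated in~\eqref{eq: final cdf of SIR DL}, the entire analytical effort has been discharged in establishing \emph{Proposition~\ref{Prop:cdf of the SINRd}}. The corollary then follows by the single substitution $z\mapsto\gamma_{th}$ in~\eqref{eq: final cdf of SIR DL}.

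There is no genuine obstacle here; the result is an immediate corollary, and the only points needing a word of care are the following. I would first note that the complementary relation $F(z)=\Prob(\SINR<z)$ holds unambiguously at the threshold because the SINR is a continuous random variable, its law having no atom at $\gamma_{th}$ since the underlying fading powers are absolutely continuous; hence the strict inequality ``$<\gamma_{th}$'' and its closure give the same value. Second, I would observe that the series in~\eqref{eq: final cdf of SIR DL} converges for every finite argument, as it arises from term-by-term integration of the uniformly convergent Taylor expansion of $e^{-\lambda_{\Ds}\pi\upsilon}$ over the bounded interval $[0,R_c^2]$, so evaluation at the particular value $z=\gamma_{th}$ preserves validity. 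With these remarks the stated expression is a legitimate closed-form outage probability for the interference-limited regime, and the proof reduces to this observation.
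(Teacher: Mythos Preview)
Your proposal is correct and matches the paper's treatment: the paper gives no proof for this corollary because it is immediate from the definition of outage probability as $F_{\SINRi}(\gamma_{th})$ together with the cdf expression already established in \emph{Proposition~\ref{Prop:cdf of the SINRd}}. Your additional remarks on continuity of the SINR distribution and termwise convergence of the series are sound but more than the paper itself provides.
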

\vspace{-0.9em}
\subsection{Achievable Sum Rate}
The achievable sum rate with simultaneous UL/DL transmission can be written as
\vspace{-0.2em}
\begin{align}
\RFD  = R_{\AP} + R_{\Ds},\label{eq:achievable rate FD}
\end{align}
where $R_{\AP}=\E{ \log_2 \left[1+\SINRAP\right]}$ and  $R_{\Ds}=\E{ \log_2 \left[1+\SINRd\right]}$ are the spatial
average capacity of the UL ($x_{\Us}\rightarrow \mathrm{AP }$) and DL ($\mathrm{AP} \rightarrow x_{\Ds}$), respectively.

Note that since $\E{X} = \int_{t=0}^{\infty}\Prob(X>t)dt$ for a nonnegative random variable $X$, the spatial
average capacity can be written as
\vspace{-0.1em}
\begin{align}
R_i&=\int_{0}^{\infty}\left[1-F_{\SINRi}(\epsilon_t)\right]dt.\label{eq:achievable downlink rate}
\end{align}
where $i\in\{\AP, \Ds\}$ and $\epsilon_t=2^t-1$.

\textbf{\emph{\\Uplink Transmission:} }
By substituting \eqref{eq:cumulative distribution function of SINRu} into~\eqref{eq:achievable downlink rate}, the exact average capacity of the UL user can be written as
\begin{align}\label{eq:spatial average capacity of the downlink user}
&R_{\AP}=
\int_{0}^{\infty}\!\!\int_{0}^{R_c}\!\!\int_{0}^{2\pi}\!\!\\
&~
\frac{2\pi\lambda_{\Ds}re^{-\epsilon_t\frac{\Sn}{P_{\Us}}(r^2+d^2-2rd\cos\theta)^{\alpha/2} }e^{-\lambda_{\Ds}\pi r^2}}
{1 +\epsilon_t\frac{P_{\AP}}{P_{\Us}}\Sap     (r^2+d^2-2rd\cos\theta)^{\alpha/2} }dr d\theta dt.\nonumber
 \end{align}
This integral cannot be solved in closed-form. Therefore, we now turn our attention into deriving the average capacity of the UL user with the interference-limited assumption and $\alpha=2, 4$.

\begin{Corollary}
Plugging ~\eqref{eq: cdf of SINRd integral over r: alpha 2 Final} into~\eqref{eq:achievable downlink rate}, the spatial average capacity of the UL user for $\alpha=2$ is given by
\begin{align}\label{eq: acheivable rate UL: alpha 2 Final}
&R_{\AP}\!= \!\!\frac{8\pi\lambda_{\Ds} }{\Sap\log2}\frac{P_{\Us}}{P_{\AP }}
\!\sum_{k=0}^{\infty}\!
\frac{(-2\pi\lambda_{\Ds})^k}
{\Gamma(k+1)}
\!\!\int_0^{\infty}\!\!\!\!\!
\frac{ c^{k+\frac{1}{2}}}{z(z+1)}
\left(\!\frac{b\!-\!\sqrt{c}\varrho}{c \!-\! b^2}
\right)^{k+1}\nonumber\\
&~\times
F_{1}\left(\!k+1;\!k+1,\!k+1;\!k+2;\!\frac{b\!-\!\sqrt{c}\varrho}{b+\sqrt{c}},\frac{b\!-\!\sqrt{c}\varrho}{b-\sqrt{c}}\right)dz.
\end{align}
\end{Corollary}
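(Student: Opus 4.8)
The plan is to begin from the spatial-average capacity identity~\eqref{eq:achievable downlink rate} specialized to $i=\AP$, namely $R_{\AP}=\int_{0}^{\infty}\bigl[1-F_{\SINRAP}(\epsilon_t)\bigr]\,dt$ with $\epsilon_t=2^{t}-1$, and to convert the integration variable from the rate $t$ to the SINR level $z$. First I would set $z=2^{t}-1$, so that $t=\log_2(1+z)$, $dt=dz/\bigl((1+z)\ln 2\bigr)$, and the endpoints $t=0,\infty$ map to $z=0,\infty$; this turns the capacity into $R_{\AP}=\frac{1}{\ln 2}\int_{0}^{\infty}\frac{1-F_{\SINRAP}(z)}{1+z}\,dz$, where $\ln 2$ is the $\log 2$ appearing in the statement.

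Next I would substitute the closed-form tail obtained from~\eqref{eq: cdf of SINRd integral over r: alpha 2 Final}. Writing $1-F_{\SINRAP}(z)=\frac{P_{\Us}}{P_{\AP}}\frac{8\pi\lambda_{\Ds}}{z\Sap}\sum_{k\ge0}\frac{(-2\pi\lambda_{\Ds}c)^{k}}{\Gamma(k+1)}\sqrt{c}\,\bigl(\tfrac{b-\sqrt{c}\varrho}{c-b^{2}}\bigr)^{k+1}F_{1}(\cdots)$, with $b,c,\varrho$ the $z$-dependent quantities defined in Proposition~1, I would pull the $z$-independent constants $\frac{8\pi\lambda_{\Ds}}{\Sap\ln 2}\frac{P_{\Us}}{P_{\AP}}$ and $\frac{(-2\pi\lambda_{\Ds})^{k}}{\Gamma(k+1)}$ out of the integral, merge the $1/z$ from the tail with the $1/(1+z)$ from the Jacobian into $1/\bigl(z(z+1)\bigr)$, and absorb the leftover $c^{k}\sqrt{c}$ into $c^{k+1/2}$. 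What remains is precisely the $z$-integral displayed in~\eqref{eq: acheivable rate UL: alpha 2 Final}; since it has no elementary primitive it is left in integral form, which completes the derivation.

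The only step requiring care is the interchange of the series over $k$ with the $z$-integration. I would justify it by recalling that, for each fixed $z\ge0$, the series in $k$ is the Taylor expansion (in $\lambda_{\Ds}\pi$) of the absolutely convergent integral $\int_{0}^{R_c^{2}}e^{-\lambda_{\Ds}\pi\upsilon}/\sqrt{\upsilon^{2}+2b\upsilon+c}\,d\upsilon$ from~\eqref{eq: cdf of SINRd integral over r}, so the partial sums admit an integrable majorant uniform on compact $z$-sets; combined with the facts that $1-F_{\SINRAP}(z)\to1$ as $z\to0^{+}$ (so the integrand stays bounded near the origin) and $1-F_{\SINRAP}(z)=O(1/z)$ as $z\to\infty$ (so the integrand is $O(1/z^{2})$ in the tail), Tonelli's theorem applies and term-by-term integration is legitimate. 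I expect this convergence and integrability bookkeeping to be the main obstacle; once it is in place, the remaining manipulations are purely algebraic.
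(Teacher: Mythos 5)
Your proposal is correct and follows essentially the same route as the paper: the Corollary is obtained exactly by the change of variable $z=2^t-1$ in~\eqref{eq:achievable downlink rate}, which produces the $1/\log 2$ factor and merges the $1/z$ from the complementary cdf of Proposition~1 with the Jacobian's $1/(1+z)$ into the $1/(z(z+1))$ kernel of~\eqref{eq: acheivable rate UL: alpha 2 Final}. Your added care about interchanging the series and the integral (absolute convergence of the exponential series on $[0,R_c^2]$ plus the $O(1/z^2)$ tail of the integrand) goes beyond what the paper records, but it supports rather than alters the derivation.
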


\begin{proposition}\label{Prop: Acheivable rate uplink}
For $\alpha=4$, the spatial average capacity of the UL user is upper bounded by
\begin{align}\label{eq:achievable uplink rates:special case:proposition}
R_{\AP}&< \frac{2}{\log 2}\sum_{k=0}^{\infty}\frac{(-1)^k(\lambda_{\Ds}\pi R_c^2)^{k+1}}{(k+1)\Gamma(k+2)}\nonumber\\
&\quad\times
G_{3 ~3}^{2~ 3} \left(\frac{ P_{\AP}}{P_{\Us}}R_c^4\Sap \bigg\vert {0,1-\frac{k+1}{2},0 \atop 0, 0, -\frac{k+1}{2}} \right),
\end{align}
where \small{$G_{p q}^{m n} \left( z \  \vert \  {a_1\cdots a_p \atop b_1\cdots b_q} \right)$ }\normalsize denotes the Meijer G-function defined in~\cite[ Eq. (9.301)]{Integral:Series:Ryzhik:1992}.
\end{proposition}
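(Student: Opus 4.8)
The plan is to feed the cdf lower bound of Proposition~\ref{Prop:cdf of the SINRup alpha4} into the rate identity~\eqref{eq:achievable downlink rate} and to reduce the resulting integral, term by term, to a Meijer $G$-function. Write $\mu_k=\tfrac{k+1}{2}$ and $c=\Sap\tfrac{P_{\AP}}{P_{\Us}}R_c^{4}$, so that~\eqref{eq: proof of the cdf of SINRd alpha final} reads $1-F_{\SINRAP}(z)<\sum_{k\ge0}\tfrac{(-1)^{k}(\lambda_{\Ds}\pi R_c^{2})^{k+1}}{\Gamma(k+2)}\,{}_{2}F_{1}(1,\mu_k;\mu_k+1;-cz)$ for all $z\ge0$. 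Since $R_{\AP}=\int_0^{\infty}[1-F_{\SINRAP}(\epsilon_t)]\,dt$, integrating this inequality and interchanging the (absolutely convergent) series with the integral yields an \emph{upper} bound on $R_{\AP}$ of the claimed shape, provided one can evaluate $J_k:=\int_0^{\infty}{}_{2}F_{1}(1,\mu_k;\mu_k+1;-c\epsilon_t)\,dt$ in closed form.

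To compute $J_k$ I would substitute $z=\epsilon_t=2^{t}-1$, so $dt=dz/\big((1+z)\log 2\big)$ and $J_k=\tfrac{1}{\log 2}\int_0^{\infty}\tfrac{{}_{2}F_{1}(1,\mu_k;\mu_k+1;-cz)}{1+z}\,dz$. The remaining integral I would evaluate by Mellin--Barnes calculus: writing $\tfrac{1}{1+z}=G_{1,1}^{1,1}\!\left(z\,\big|\,{0\atop0}\right)$ and ${}_{2}F_{1}(1,\mu_k;\mu_k+1;-cz)=\mu_k\,G_{2,2}^{1,2}\!\left(cz\,\big|\,{0,\,1-\mu_k\atop0,\,-\mu_k}\right)$, inserting the contour representation of the latter, exchanging the $z$-integration with the contour, computing the inner integral via $\int_0^{\infty}z^{s}(1+z)^{-1}\,dz=\Gamma(1+s)\Gamma(-s)$ on the strip $-\min(1,\mu_k)<\Re s<0$, and recognising the surviving contour integral as a Meijer $G$-function, one obtains
\begin{align*}
\int_0^{\infty}\frac{{}_{2}F_{1}(1,\mu_k;\mu_k+1;-cz)}{1+z}\,dz
=\mu_k\,G_{3,3}^{2,3}\!\left(c\,\bigg|\,{0,\,1-\mu_k,\,0\atop0,\,0,\,-\mu_k}\right),
\end{align*}
which already displays the parameter pattern of~\eqref{eq:achievable uplink rates:special case:proposition}. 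A more elementary route to the same identity keeps the Euler representation ${}_{2}F_{1}(1,\mu_k;\mu_k+1;-cz)=\mu_k\int_0^{1}t^{\mu_k-1}(1+czt)^{-1}\,dt$, swaps the order of integration, uses $\int_0^{\infty}\frac{dz}{(1+z)(1+czt)}=\frac{\log(ct)}{ct-1}=G_{2,2}^{2,2}\!\left(ct\,\big|\,{0,0\atop0,0}\right)$, and then integrates $t^{\mu_k-1}$ against this $G$-function.

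Substituting $J_k=\tfrac{\mu_k}{\log 2}\,G_{3,3}^{2,3}(\cdot)$ back into the termwise-integrated bound and collecting scalar factors (using $\Gamma(k+2)=(k+1)\Gamma(k+1)$) then gives~\eqref{eq:achievable uplink rates:special case:proposition}. The main obstacle is the middle step: evaluating $\int_0^{\infty}{}_{2}F_{1}(1,\mu_k;\mu_k+1;-cz)/(1+z)\,dz$ as a single Meijer $G$-function, and in particular checking the convergence of the Barnes integral and the legitimacy of the Fubini interchange in the borderline case $k=0$, where $\mu_0=\tfrac12$ and the integrand decays only like $z^{-3/2}$; matching the precise $G$-function parameters and the scalar constant, and justifying the series-integral interchange in the first step, is then routine bookkeeping.
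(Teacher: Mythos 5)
Your route is essentially the paper's route: substitute the cdf lower bound of Proposition~\ref{Prop:cdf of the SINRup alpha4} into $R_{\AP}=\int_0^\infty\bigl[1-F_{\SINRAP}(2^t-1)\bigr]dt$, integrate the series termwise, change variables to $y=2^t-1$, and evaluate $\int_0^\infty (1+y)^{-1}\,{}_2F_1(1,\mu_k;\mu_k+1;-cy)\,dy$ by writing both factors as Meijer $G$-functions and applying the Mellin convolution theorem --- this is exactly the pair of Adamchik--Marichev identities (their Eqs.~(17) and (21)) that the paper invokes, and your Mellin--Barnes description is just that theorem spelled out. The Euler-representation detour is a harmless elementary cross-check. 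Your convergence caveats (Fubini for the series, the $k=0$ strip) are legitimate but routine.

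There is, however, one concrete inconsistency you should resolve: your intermediate identity and your final sentence do not agree, and the mismatch lands exactly on the constant in the statement. The conversion ${}_2F_1(1,\mu_k;\mu_k+1;-x)=\tfrac{\Gamma(\mu_k+1)}{\Gamma(1)\Gamma(\mu_k)}\,G_{2,2}^{1,2}\bigl(x\,\big|\,{0,\,1-\mu_k\atop 0,\,-\mu_k}\bigr)=\mu_k\,G_{2,2}^{1,2}(\cdot)$ gives, as you correctly write, $J_k=\tfrac{\mu_k}{\log 2}\,G_{3,3}^{2,3}(c\mid\cdot)$ with $\mu_k=\tfrac{k+1}{2}$. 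Feeding this into the termwise bound produces the per-term coefficient $\tfrac{1}{\log 2}\cdot\tfrac{\mu_k}{\Gamma(k+2)}=\tfrac{1}{2\log 2\,\Gamma(k+1)}$, \emph{not} the coefficient $\tfrac{2}{\log 2}\cdot\tfrac{1}{(k+1)\Gamma(k+2)}$ displayed in~\eqref{eq:achievable uplink rates:special case:proposition}; the two differ by $\mu_k^2=\bigl(\tfrac{k+1}{2}\bigr)^2$ in each term. The paper's own proof carries the conversion constant as $\tfrac{2}{k+1}=1/\mu_k$ (i.e.\ $\Gamma(a)\Gamma(b)/\Gamma(c)$ rather than its reciprocal), which is what generates the displayed constant. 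So either your $J_k$ is the correct one --- in which case the bound should read with per-term coefficient $\tfrac{(-1)^k(\lambda_{\Ds}\pi R_c^2)^{k+1}}{2\log 2\,\Gamma(k+1)}$ and the proposition's constant needs correcting --- or you must not assert that your $J_k$ ``gives''~\eqref{eq:achievable uplink rates:special case:proposition}. As written, the last step of your assembly does not follow from your own intermediate result, and this is the one genuine gap in the proposal.
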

\begin{proof}
By substituting the lower bound of $F_{\SINRAP}(\cdot)$ from Proposition~\ref{Prop:cdf of the SINRup alpha4} into~\eqref{eq:achievable downlink rate}, and applying the transformation $y=2^t-1$, an upper bound for the average capacity of the UL user can be derived as
\begin{align}
R_{\AP}&< \frac{1}{\log2}\sum_{k=0}^{\infty}\frac{(-1)^k(\lambda_{\Ds}\pi R_c^2)^{k+1}}{\Gamma(k+2)}\nonumber\\
&\times\underbrace{\int_0^{\infty} \frac{1}{y+1} {}_{2}F_{1} \left(\!1,\frac{k\!+\!1}{2},\frac{k\!+\!1}{2}+1,\!-\Sap\frac{ P_{\AP}}{P_{\Us}}R_c^4 y\!\right)dy,}_{\mathcal{I}}\nonumber
\end{align}
where the integral, $\mathcal{I}$ can be
expressed~\cite[Eq. (17)]{Adamchik:1990} in terms of the tabulated Meijer
G-function as
\begin{align}
\mathcal{I}= \frac{2}{k+1}&\int_0^{\infty}
G_{1 1}^{1 1} \left( y \ \Big \vert \  {0 \atop 0} \right)\nonumber\\
&\!\times
G_{2 2}^{1 2} \left( \frac{P_{\AP}}{P_{\Us}}R_c^4\Sap y\! \  \Big\vert \  {0, 1-\frac{k+1}{2} \atop 0,-\frac{k+1}{2}} \right)dy.
\end{align}
The above integral can be solved with the help of~\cite[Eq. (21)]{Adamchik:1990}  to yield the desired result in~\eqref{eq:achievable uplink rates:special case:proposition}.
\end{proof}

\textbf{\emph{{\\Downlink Transmission:}}}
By plugging \eqref{eq:cdf of SINRd sec step} into~\eqref{eq:achievable downlink rate}, exact average capacity of the DL user can be written as
\begin{align}\label{eq: ache. sum rate dl user general}
R_{\Ds}=2\pi\lambda_{\Ds}\int_{0}^{\infty}\int_{0}^{\infty}\frac{r e^{-\epsilon_t\frac{\Sn}{P_{\AP}}r^{\alpha} } e^{-\lambda_{\Ds}\pi r^2} }
{1 +\left(\frac{r}{d}\right)^{\alpha}\frac{P_{\Us}}{P_{\AP} }\epsilon_t}dr.
\end{align}
Moreover, for the interference-limited case (i.e., $\Sn=0$), using the cdf in~\eqref{eq: final cdf of SIR DL}, Proposition~\ref{Prop: Acheivable rate downlink} presents the average capacity of the DL user.
\begin{proposition}\label{Prop: Acheivable rate downlink}
The spatial average capacity of the DL user in the interference-limited case is  expressed as
\begin{align}\label{eq:achievable sum rate of the DL}
R_{\Ds}&= \frac{\alpha}{2}\sum_{k=0}^{\infty}\frac{(-1)^k(\lambda_{\Ds}\pi R_c^2)^{k+1}}{(k+1)\Gamma(k+2)}\nonumber\\
&\quad
\times G_{3 ~3}^{2~ 3} \left(\frac{P_{\Us}}{ P_{\AP}}\left(\frac{R_c}{d}\right)^{\alpha} \bigg\vert {0,1-\frac{2(k+1)}{\alpha},0 \atop 0, 0, -\frac{2(k+1)}{\alpha}} \right).
\end{align}
\end{proposition}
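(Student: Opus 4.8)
The plan is to start from the exact DL rate integral and follow exactly the same machinery that produced the upper bound on $R_{\AP}$ in \emph{Proposition~\ref{Prop: Acheivable rate uplink}}, this time with no bounding step since the DL cdf in \eqref{eq: final cdf of SIR DL} is exact. First I would substitute the cdf \eqref{eq: final cdf of SIR DL} into the rate formula \eqref{eq:achievable downlink rate}, i.e.\ write $R_{\Ds}=\int_0^\infty [1-F_{\SINRd}(\epsilon_t)]\,dt$ with $\epsilon_t=2^t-1$. Interchanging the (absolutely convergent) series over $k$ with the integral over $t$ leaves, for each $k$, a one-dimensional integral of the form $\int_0^\infty {}_2F_1\!\big(1,\tfrac{2(k+1)}{\alpha};\tfrac{2(k+1)}{\alpha}+1;-z\tfrac{P_{\Us}}{P_{\AP}}(R_c/d)^\alpha\,\epsilon_t\big)\,dt$, multiplied by the prefactor $(-1)^k(\lambda_{\Ds}\pi R_c^2)^{k+1}/\Gamma(k+2)$.

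Next I would apply the change of variable $y=2^t-1$ (so $dt = dy/[(y+1)\log 2]$), which turns the $t$-integral into $\tfrac{1}{\log 2}\int_0^\infty \tfrac{1}{y+1}\,{}_2F_1\!\big(1,\tfrac{2(k+1)}{\alpha};\tfrac{2(k+1)}{\alpha}+1;-a y\big)\,dy$ with $a=\tfrac{P_{\Us}}{P_{\AP}}(R_c/d)^\alpha$ — precisely the integral $\mathcal{I}$ that appears in the proof of \emph{Proposition~\ref{Prop: Acheivable rate uplink}}, only with the Gauss parameter $\tfrac{k+1}{2}$ replaced by $\tfrac{2(k+1)}{\alpha}$ (which reduces to the former when $\alpha=4$). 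I would then rewrite $\tfrac{1}{y+1}=G^{1\,1}_{1\,1}\!\big(y\,\vert\,{0\atop 0}\big)$ and the Gauss hypergeometric as a Meijer $G$-function via \cite[Eq. (17)]{Adamchik:1990}, namely ${}_2F_1(1,c;c+1;-ay) = c\,G^{1\,2}_{2\,2}\!\big(ay\,\vert\,{0,1-c\atop 0,-c}\big)$ with $c=\tfrac{2(k+1)}{\alpha}$, and evaluate the resulting Mellin convolution of two $G$-functions with \cite[Eq. (21)]{Adamchik:1990}. Collecting the prefactors $\tfrac{1}{\log 2}$, the $1/c = \tfrac{\alpha}{2(k+1)}$ coming from the $G$-function representation, and the original $1/\Gamma(k+2)$, one arrives at the $\tfrac{\alpha}{2}\sum_k \tfrac{(-1)^k(\lambda_{\Ds}\pi R_c^2)^{k+1}}{(k+1)\Gamma(k+2)}$ prefactor and the claimed $G^{2\,3}_{3\,3}$ with parameters $\{0,1-\tfrac{2(k+1)}{\alpha},0\}$ over $\{0,0,-\tfrac{2(k+1)}{\alpha}\}$ and argument $a$, matching \eqref{eq:achievable sum rate of the DL}.

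The main obstacle is bookkeeping rather than anything conceptual: one must verify that the term-by-term integration is legitimate (dominated convergence / uniform convergence of the series on $[0,R_c^2]$ before the $t$-integration, and that each $t$-integral converges — the ${}_2F_1$ decays like $y^{-1}$ for large $y$ since its first parameter is $1$, so $\tfrac{1}{y+1}\,{}_2F_1(\cdots)\sim y^{-2}$ and the integral is finite), and that the argument and parameter lists fed into \cite[Eq. (21)]{Adamchik:1990} satisfy that formula's convergence strip conditions. Since the DL case differs from the already-proved UL case only in the replacement $\tfrac{k+1}{2}\mapsto\tfrac{2(k+1)}{\alpha}$ and in dropping the inequality (here we have equality because \eqref{eq: final cdf of SIR DL} is exact, whereas the $\alpha=4$ UL cdf was only a lower bound obtained by setting $d=0$ in an intermediate step), the derivation is essentially a transcription, which is why the paper is content to omit it.
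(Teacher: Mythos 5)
Your proposal is correct and is exactly the route the paper intends: the paper omits this proof as ``similar to Proposition~\ref{Prop: Acheivable rate uplink}'', i.e.\ substitute the exact cdf \eqref{eq: final cdf of SIR DL} into \eqref{eq:achievable downlink rate}, change variable $y=2^t-1$, rewrite $\tfrac{1}{y+1}$ and the ${}_2F_1$ as Meijer $G$-functions via \cite[Eq.~(17)]{Adamchik:1990}, and integrate with \cite[Eq.~(21)]{Adamchik:1990}, with $\tfrac{k+1}{2}$ replaced by $\tfrac{2(k+1)}{\alpha}$ and no bounding step. One bookkeeping remark: carrying the $\tfrac{1}{\log 2}$ through as you describe actually yields a prefactor $\tfrac{\alpha}{2\log 2}$ rather than the $\tfrac{\alpha}{2}$ printed in \eqref{eq:achievable sum rate of the DL} (compare the $\tfrac{2}{\log 2}$ in \eqref{eq:achievable uplink rates:special case:proposition}), so the stated result appears to be missing a $1/\log 2$ and your derivation, not the displayed formula, is the consistent one.
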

\begin{proof}
The proof, similar to \emph{Proposition~\ref{Prop: Acheivable rate uplink}}, is omitted.
\end{proof}
\vspace{+1em}
\textbf{\emph{Asymptotic Analysis:}}
In order to present further insights into the system performance, we now investigate the asymptotic outage probability and achievable rate by neglecting the interference terms in the UL and DL SINRs. Therefore, with negligible LI effect, one can omit the term $P_{\AP} h_{\AP\AP}$ in~\eqref{eq:SINR at AP in uplink}. Finally, the cdf of the SNR in the special case of $\alpha=2$ can be obtained as
\begin{align} \label{eqn:cdf_SNRa_Asyp}
F_{\SNRa}(z) &=1-\left(1+\frac{z}{\psi_{\Us}}\right)^{-1}e^{-\frac{\lambda_{\Ds}\pi d^2}{1 + \frac{\psi_{\Us}}{z}}},
\end{align}
where $\psi_{\Us}=\frac{P_{\Us}}{\Sn}\lambda_{\Ds}\pi$. The asymptotic outage of the UL transmission can be determined by substituting $z=\gamma_{th}$ into~\eqref{eqn:cdf_SNRa_Asyp}. Furthermore, the corresponding asymptotic spatial average capacity of the UL user is
\begin{align} \label{eqn:achievabe rate UP}
R_{\AP} &=\frac{1}{\log2}
\left(\frac{1}{\psi_{\Us}}-1\right)^{-1}e^{-\frac{\lambda_{\Ds}\pi d^2 }{\psi_{\Us}}}\nonumber\\
&\quad\times
\left(
\mathrm{Ei}\left(\frac{\lambda_{\Ds}\pi d^2}{1-\psi_{\Us}}\right)-
\mathrm{Ei}\left(\frac{\psi_{\Us}\lambda_{\Ds}\pi d^2}{1-\psi_{\Us}}\right)\right),
\end{align}
where $\mathrm{Ei}\left(\cdot\right)$ is the exponential integral function defined in~\cite[Eq. (8.211.1)]{Integral:Series:Ryzhik:1992}.
\begin{figure}[th]
\vspace{0.6em}
\centering
\includegraphics[width=87mm, height=64mm]{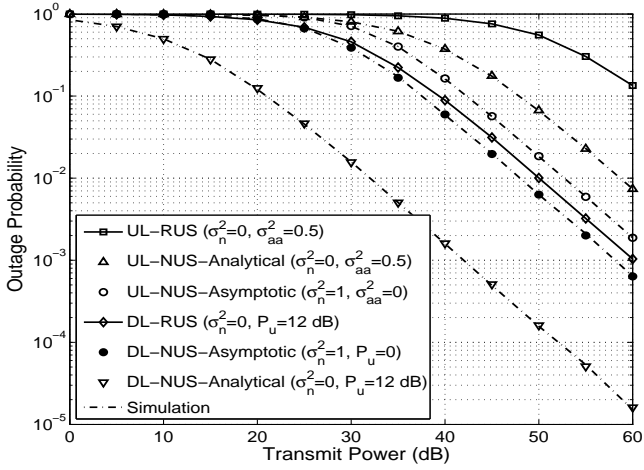}
\vspace{-1.7em} \caption{Outage performance of the DL and UL user for nearest user selection (NUS) and random user selection (RUS) ($d=25$ m and $\gamma_{th}=3$ dB)} \label{fig:outage_probability_DL_user} \vspace{-1.2em}
\end{figure}
Similarly, by neglecting the term $P_{\Us} h_{ \Us\Ds} \ell( x_{\Us}-x_{\Ds})$ in~\eqref{eq:SINR: downlonk user:single antenna case},  a valid assumption for $P_{\Us} d^{-\alpha}\ll1$, we  obtain
\begin{align} \label{eqn:cdf_SNRd_Asyp}
 F_{\SNRd}(z) &=
\left\{%
 \begin{array}{clcr}
  1-\left(1+\frac{z\lambda_{\Ds}\pi}{\psi_{\Ds}}\right)^{-1}&                  \alpha=2,  \\
  1- \sqrt{\frac{\psi_{\Ds}}{2z}} e^{\frac{\psi_{\Ds}}{8z}}D_{-1} \left( \sqrt{\frac{\psi_{\Ds}}{2z}}\right)                 &     \alpha=4
  \end{array}%
\right.
\end{align}
where $\psi_{\Ds} = \frac{P_{\AP}}{\Sn}(\lambda_{\Ds}\pi)^2 $ and $D_{-1}(\cdot)$ denotes a Parabolic cylinder function~\cite[Eq. (9.241.2)]{Integral:Series:Ryzhik:1992}. Accordingly, the DL user asymptotic outage probability can be readily obtained by substituting $z=\gamma_{th}$ into~\eqref{eqn:cdf_SNRd_Asyp}. Moreover, the corresponding rates are given by
\vspace{-0.4em}
\begin{align} \label{eqn:achievabe rate DL}
R_{\Ds} &=\frac{1}{\log2}
\left\{%
 \begin{array}{clcr}
  \left(\frac{\lambda_{\Ds}\pi}{\psi_{\Ds}}-1\right)^{-1}\log\left(\frac{\lambda_{\Ds}\pi}{\psi_{\Ds}}\right)&                  \alpha=2,  \\
  \int_{0}^{\infty}\frac{1}{z+1}\sqrt{\frac{\psi_{\Ds}}{2z}} e^{\frac{\psi_{\Ds}}{8z}}D_{-1} \left( \sqrt{\frac{\psi_{\Ds}}{2z}}\right)dz              &     \alpha=4.
  \end{array}%
\right.
\end{align}
\subsection{Half-Duplex Mode}
In this subsection, we compare the performance of the HD and FD modes of operation at the AP. In the HD mode of AP operation, AP employs orthogonal time slots to serve the DL and UL user, respectively. In order to keep our comparisons fair, we consider \emph{``antenna conserved''} (AC) and \emph{``RF-chain conserved''} (RC)
scenarios. Under AC condition, the total number of antennas used by the HD AP and FD AP is kept identical. However, the number of radio frequency (RF) chains employed by the HD AP is twice that of the FD AP~\cite{Khojastepour:Mobicom:2012} and hence former system would be a costly option. Under RC condition, the total number RF chains used is same for the HD and FD modes. Therefore, in any transmission (UL or DL), the HD AP only uses a single antenna under the RC condition, while it uses two antennas under the AC condition.
\begin{figure}[th]
\vspace{-0.3em}
\centering
\includegraphics[width=87mm, height=67mm]{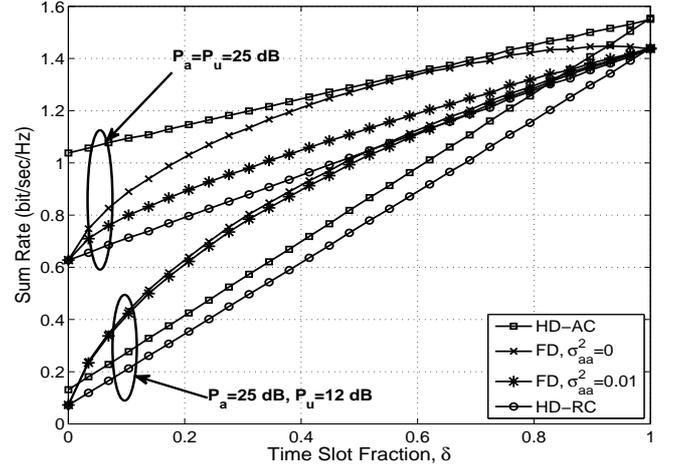}
\vspace{-1.9em} \caption{Average sum rate versus $\delta$ for the FD (FD) and half-dulpex (HD) AP  ($\alpha=2$, and $d=25$ m).}
\label{fig: sum_rate_v_delta} \vspace{-0.6em}
\end{figure}
The average sum rate under the RC condition can be expressed as
\vspace{-0.2em}
\begin{align}\label{eq: sum rate of single-antenna HD AP}
\RHDs=&\delta\E{\log_2\left(1+\snr_{\Ds}\ell(x_{\Ds})| h_{\AP,\Ds}|^2\right)}\nonumber\\
&+(1-\delta)\E{\log_2\left(1+\snr_{\Us}\ell(x_{\Us})| h_{\Ds,\AP}|^2\right)},
\end{align}
where $\delta$ ($0<\delta<1$) is a fraction  of the time slot duration of $T$, used for DL transmission, $\snr_{\Ds} = P_{\AP}^{\HD}/\Sn$, and $\snr_{\Us} = P_{\Us}^{\HD}/\Sn$.

Under the AC condition, using the weight vector $\vw_{\MRC} = \vh_{\Ds,\AP}^H$ for the maximum ratio combining (MRC) receiver, and the maximum ratio transmission (MRT) precoding vector $\vw_{\MRT} = \frac{\vh_{\AP,\Ds}^H} {\|\vh_{\AP,\Ds}\|}$, the average achievable rate can be obtained as
\vspace{-0.2em}
\begin{align}\label{eq: sum rate of dual-antenna HD AP}
\RHDd =& \delta\E{\log_2\left(1+\!\frac{\snr_{\Ds} }{2}\ell(x_{\Ds})\|\vh_{\AP,\Ds}\|^2\right)}\nonumber\\
&+(1-\delta)\E{\log_2\left(1+\!\snr_{\Us} \ell(x_{\Us})\| \vh_{\Ds,\AP}\|^2\right)}.
\end{align}
\section{Numerical Results and Discussion}\label{sec:Numerical results}
Here, we investigate the system performance  and confirm the derived analytical results through comparison with Monte Carlo simulations. We evaluate the performance in a cell of radius $R_c=200$ m and for $\lambda_{\Ds}=1\times 10^{-3}$ node/$\text{m}^2$. Moreover, with curves shown in Figs. 3-5, we assume that the total power of  the AP and UL user for  FD and HD modes is the same.

Fig.~\ref{fig:outage_probability_DL_user} shows the outage probability versus SNR for the nearest DL user (to the AP) and UL user for $\alpha=2$, $d=25$ m and $\gamma_{th}=3$ dB. In this figure, the X-axis indicates the power of the transmitter (i.e., AP for DL and UL user for UL). The outage probability of the RUS scheme is also included as a benchmark comparison. The `Analytical' curves are plotted from~\eqref{eq: cdf of SINRd integral over r: alpha 2 Final} and~\eqref{eq: final cdf of SIR DL} with $z=\gamma_{th}$, for nearest UL user and DL user, respectively, which clearly match the Monte Carlo simulated curves. As expected, we see that the nearest user selection (NUS) scheme outperforms the RUS scheme. In addition, the `Asymptotic' curves plotted from~\eqref{eqn:cdf_SNRa_Asyp} and~\eqref{eqn:cdf_SNRd_Asyp} tightly converge to the simulation values.

In Fig.~\ref{fig: sum_rate_v_delta} we compare the average sum rate as a function of $\delta$ for the FD and HD operation and for two different values of $\Sap$. We assume same total energy consumption for both FD and HD operation and plot the sum rate for two different power constraints $(P_{\AP},P_{\Us}) =(25~\text{dB}, 25~\text{dB} )$ (symmetric) and $(P_{\AP},P_{\Us}) =(25~\text{dB}, 12~\text{dB})$ (asymmetric). In particular, numerical results lead to the following conclusions:
\emph{$1$)} As expected, the sum rate under the RC condition is worse than those of other scenarios. \emph{$2$)} In the asymmetric case, FD operation outperforms HD within the practical range of $\delta$. However, in the symmetric case, AC condition achieves the best performance even in case of perfect LI cancellation (i.e., $\Sap=0$).
\emph{$3$)} The symmetric case is more vulnerable to the LI power (Please see Fig.~\ref{fig: Average_spectral_efficiency_gain}).

\begin{figure}[t]
\vspace{-0.35em}
\centering
\includegraphics[width=87mm, height=65mm]{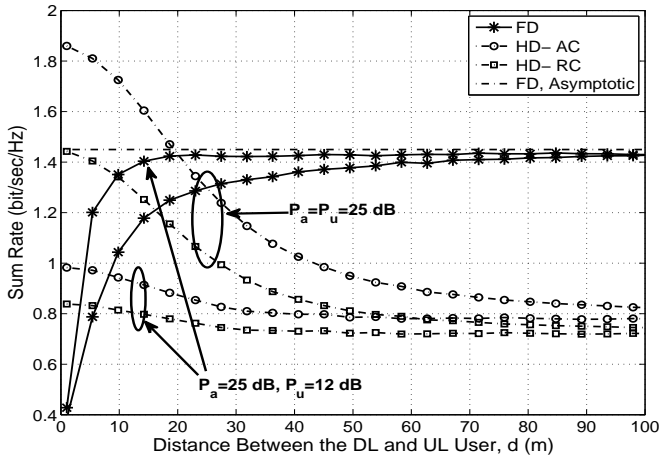}
\vspace{-1.3em} \caption{ Average sum rate versus $d$ the FD and half-dulpex (HD) AP ($\alpha=2$, $\delta=0.5$, and $\Sap=0.1$).} \label{fig: sum_rate_v_d} \vspace{-1.1em}
\end{figure}
\begin{figure}[t]
\vspace{-.2em}
\centering
\includegraphics[width=87mm, height=65mm]{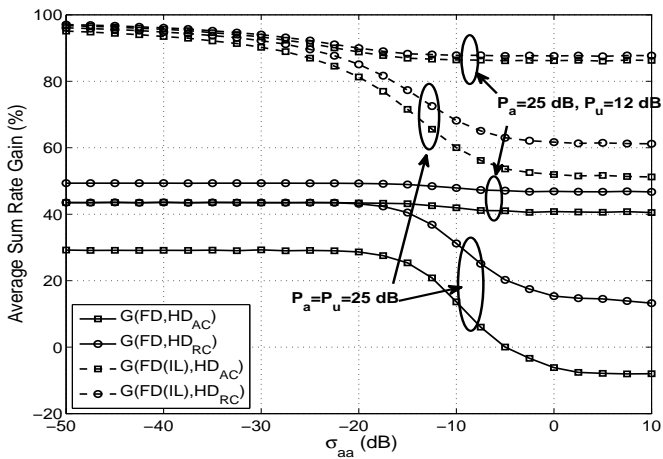}
\vspace{-1.9em} \caption{Average sum rate gain of the system ($\alpha=2$, $d=25$ m, and $\delta=0.5$).}
\label{fig: Average_spectral_efficiency_gain} \vspace{-1.5em}
\end{figure}
In Fig.~\ref{fig: sum_rate_v_d}, we present the average sum rate (with $\delta = 0.5$ and $\Sap=0.1$) versus the distance $d$ between the UL and DL user achieved by the FD and HD modes of operation. There are two main observations that can be extracted from this figure. First, the sum rate shows the opposite behaviors in FD and HD modes as $d$ increases. This result can be explained as follows. Notice that when $d$ increases the inter user interference between the DL and UL user decreases and thus $\SINRd$ and consequently the average sum rate of the FD system increases. On the other hand, the sum rate of the HD operation is inversely proportional to $d$. Therefore, increasing $d$, reduces the sum rate. Secondly, as $d$ increases the average sum rate of the FD system converges to the sum of asymptotic rate in~\eqref{eqn:achievabe rate UP} and~\eqref{eqn:achievabe rate DL}.

In Fig.~\ref{fig: Average_spectral_efficiency_gain} we plot the average sum rate gain, which is defined as $G(\FD,\HD_{i})\! = (R_{\FD}\!-R_{\HD}^i)/R_{\FD}$ versus $\sigma_{\AP\AP}$ and for $d=25$ m and $\delta=0.5$. The sum rate gain of the interference-limited FD system  is also included for comparison (dashed line curves). A general observation is that FD  significantly outperforms the HD counterpart when LI is substantially suppressed. However, when $\sigma_{\AP\AP}\geq -5$ dB, the AC-HD system outperforms the FD system (symmetric power case). Moreover, the symmetric power case is more sensitive to the LI effect.


\section{Conclusion}
We have analyzed the performance of a wireless network scenario where a FD AP is communicating with spatially random HD user terminals in the downlink and uplink channels simultaneously. We derived the outage probability and achievable sum rate of the system, considering the impact of the LI channel and inter user interference. Then, we compared the performance of the FD and HD modes of operations for the same total power budget. We found that even if the LI cancellation is imperfect, FD transmissions with different transmit power levels at the AP and the UL user can achieve significant performance gains as compared to the HD mode of operation.
\balance

\begin{thebibliography}{1}

\bibitem{Margetts:2007}
D. W. Bliss, P. A. Parker, and A. R. Margetts, ``Simultaneous transmission and reception for improved wireless network performance,'' \emph{
in Proc. IEEE Workshop Statist. Signal Process.}, Madison, WI, Aug.2007, pp. 478–482.


\bibitem{Sabharwal:JSac:2014}
A. Sabharwal, P. Schniter, D. Guo, D. W. Bliss, S. Rangarajan, and R. Wichman, ``In-band full-duplex wireless: Challenges and opportunities,'' \emph{
IEEE J. Sel. Areas Commun.}, vol. 32, pp. 1637-1652, Sep. 2014.

\bibitem{Duarte:PhD:dis}
M. Duarte, ``Full-duplex wireless: Design, implementation and characterization,'' Ph.D. dissertation, Dept. Elect. and Computer Eng., Rice
University, Houston, TX, 2012.

\bibitem{Riihonen:spawc:2009}
T. Riihonen, S. Werner, R. Wichman, and E. B. Zacarias, ``On the feasibility
of full-duplex relaying in the presence of loop interference,'' \emph{ in
Proc.10th IEEE Workshop on Signal Process. Adv. Wireless Commun.},
Perugia, Italy, June 2009, pp. 275–279.



\bibitem{Riihonen:TSP:2011}
T. Riihonen, S. Werner, and R. Wichman, ``Mitigation of loopback self-interference in full-duplex MIMO relays,'' \emph{ IEEE Trans. Signal Process.}, vol. 59, pp. 5983-5993, Dec. 2011.


\bibitem{Khojastepour:Mobicom:2012}
E. Aryafar, M. A. Khojastepour, K. Sundaresan, S. Rangarajan, and
M. Chiang, ``MIDU: Enabling MIMO full duplex,''in \emph{ Proc. 18th Intl.
Conf. Mobile Computing and Networking (ACM Mobicom ’12)}, New York, NY, Aug. 2012, pp. 257–268.

\bibitem{Sachin:NSDI:2014}
D. Bharadia and S. Katti, ``Full-duplex MIMO radios,'' in\emph{ Proc. 11th
USENIX Symp. Networked Syst}. Design and Implementation (NSDI
’14), Seattle, WA, Apr. 2014, pp. 359–372.

\bibitem{Ngo:JSAC:2014}
H. Q. Ngo, H. A. Suraweera, M. Matthaiou, and E. G. Larsson, ``Multipair
full-duplex relaying with massive arrays and linear processing,'' \emph{
IEEE J. Sel. Areas Commun.}, vol. 32, pp. 1721–1737, Sep 2014.

\bibitem{Sanjay:CISS:2013}
S. Goyal, P. Liu, S. Hua, and S. S. Panwar, ``Analyzing a full-duplex
cellular system,'' in\emph{ Proc. 47th Annual Conf. on Information Sciences
and Systems (CISS)}, Baltimore, MD, Mar. 2013, pp. 1–6.


\bibitem{Girnyk:2013}
M. Vehkaper a, M. Girnyk, T. Riihonen, R. Wichman, and L. Rasmussen,
``On achievable rate regions at large-system limit in full-duplex
wireless local access,'' in\emph{ Proc. 1st Intl. Black Sea Conf. Commun. and
Networking (BlackSeaCom)}, Batumi, Georgia, July 2013, pp. 7–11.




\bibitem{DBeiYin:ACSSC}
B. Yin, M. Wu, C. Studer, J. R. Cavallaro, and J. Lilleberg, ``Full-duplex
in large-scale wireless systems,'' in\emph{ Proc. Asilomar Conf.
Signals, Systems and Computers (ASILOMAR 2013)}, Paciﬁc Grove,
CA, Nov. 2013, pp. 1623–1627.

\bibitem{Achaleshwar:acssc:DS13}
A. Sahai, S. Diggavi, and A. Sabharwal, ``On uplink/downlink fullduplex
networks,'' in\emph{ Proc. Asilomar Conf. Signals, Systems and
Computers (ASILOMAR 2013)}, Paciﬁc Grove, CA, Nov. 2013, pp. 14–18.

\bibitem{Panwar:ICC14}
S. Goyal, P. Liu, S. S. Panwar, R. A. DiFazio, R. Yang, J. Li, and
E. Bala, ``Improving small cell capacity with common-carrier full
duplex radios,'' in\emph{ Proc. IEEE Intl. Conf. Commun. (ICC 2014)}, Sydney,
Australia, June 2014, pp. 4987–4993.

\bibitem{tspNguyen:2013}
D. Nguyen, L.-N. Tran, P. Pirinen, and M. Latva-aho, ``Precoding for
full duplex multiuser MIMO systems: Spectral and energy efficiency
maximization,'' \emph{ IEEE Trans. Signal Process.}, vol. 61, pp. 4038–4050,
Aug. 2013.


\bibitem{Blaszczyszyn:book:2010}
F. Baccelli and B. Blaszczyszyn,'' \emph{ Stochastic Geometry and Wireless
Networks. NOW: Foundations and Trends in Networking}, 2010.

\bibitem{Andrews:MCOM:2013}
J. G. Andrews, ``Seven ways that HetNets are a cellular paradigm shift,'' \emph{
IEEE Commun. Mag.}, vol. 51, pp. 136–144, Mar. 2013.

\bibitem{Himal:FD:JWCOM}
I. Krikidis, H. A. Suraweera, P. J. Smith, and C. Yuen, ``Full-duplex
relay selection for amplify-and-forward cooperative networks,'' \emph{ IEEE
Trans. Wireless Commun.}, vol. 11, pp. 4381–4393, Dec. 2012



\bibitem{Riihonen:TWC:2011}
T. Riihonen, S. Werner, and R. Wichman, ``Hybrid full-duplex/half-duplex relaying with transmit power adaptation,'' \emph{ IEEE
Trans. Wireless Commun.}, vol. 10, pp. 3074-3085, Sep. 2011.

\bibitem{Haenggi:IT:2005}
M. Haenggi, ``On distances in uniformly random networks,'' \emph{ IEEE
Trans. Inf. Theory}, vol. 51, pp. 3584–3586, Oct. 2005.


\bibitem{Integral:Series:Ryzhik:1992}
I. S. Gradshteyn and I. M. Ryzhik, \emph{Table of Integrals, Series and Products}. $7$th ed. Academic Press, 2007.

\bibitem{Transcendental:book}
A. Erdelyi, \emph{ Higher Transcendental Functions. New York: McGraw-
Hill}, 1953, vol. 1.

\bibitem{Adamchik:1990}
V. S. Adamchik and O. I. Marichev, ``The algorithm for calculating
integrals of hypergeometric type functions and its realization in RE-
DUCE system,'' in\emph{ in Proc. Int. Conf. Symbolic and Algebraic Comput.},
Tokyo, Japan, 1990, pp. 212 – 224.
%




\end{thebibliography}

\bibliographystyle{IEEEtran}

\end{document}